\def\ps@headings{%
\def\@oddhead{\mbox{}\scriptsize\rightmark \hfil \thepage}%
\def\@evenhead{\scriptsize\thepage \hfil \leftmark\mbox{}}%
\def\@oddfoot{}%
\def\@evenfoot{}}
\newtheorem{theorem}{Theorem}
\begin{document}


\title{Identification of $\cal K$ Most Vulnerable Nodes in Multi-layered Network Using a New Model of Interdependency
\thanks{This research is supported in part by a grant from the U.S. Defense Threat Reduction Agency under grant number HDTRA1-09-1-0032 and by a grant from the U.S. Air Force Office of Scientific Research under grant number FA9550-09-1-0120.}
}
\author{\IEEEauthorblockN{Arunabha Sen, Anisha Mazumder, Joydeep Banerjee, Arun Das and Randy Compton}
\IEEEauthorblockA{ Computer Science and Engineering Program\\
\small School of Computing, Informatics and Decision System Engineering\\
\small Arizona State University\\
\small Tempe, Arizona 85287\\
\small Email: \{asen, amazumde, Joydeep.Banerjee, adas22, Randy.Compton\}@asu.edu}
}

\maketitle

\begin{abstract}
The critical infrastructures of the nation including the power grid and the communication network are highly interdependent.
Recognizing the need for a deeper understanding of the interdependency in a multi-layered network, significant efforts have been made by the research community in the last few years to achieve this goal. Accordingly a number of models have been proposed and analyzed. Unfortunately, most of the models are over simplified and, as such, they fail to capture the complex interdependency that exists between entities of the power grid and the communication networks involving a combination of conjunctive and disjunctive relations. To overcome the limitations of existing models, we propose a new model that is able to capture such complex interdependency relations. Utilizing this model, we provide techniques to identify the $\cal K$ most vulnerable nodes of an interdependent network. We show that the problem can be solved in polynomial time in some special cases, whereas for some others, the problem is NP-complete. We establish that this problem is equivalent to computation of a {\em fixed point} of a multilayered network system and we provide a technique for its computation utilizing Integer Linear Programming. Finally, we evaluate the efficacy of our technique using real data collected from the power grid and the communication network that span the Maricopa County of Arizona.
\end{abstract}

\section{Introduction }
In the last few years there has been an increasing awareness
in the research community that the critical infrastructures of
the nation are closely coupled in the sense that the well being
of one infrastructure depends heavily on the well being of another. A case in point is the interdependency between the electric power grid and the communication network. The power
grid entities, such as the SCADA systems that control power
stations and sub-stations, receive their commands through
communication networks, while the entities of communication
network, such as routers and base stations, cannot operate
without electric power. Cascading failures in the power grid,
are even more complex now because of the coupling between
power grid and communication network. Due to this coupling,
not only entities in power networks, such as generators and
transmission lines, can trigger power failure, communication
network entities, such as routers and optical fiber lines, can
also trigger failure in power grid. Thus it is essential that
the interdependency between different types of networks be
understood well, so that preventive measures can be taken to avoid cascading catastrophic failures in multi-layered network
environments.

Recognizing the need for a deeper understanding of the
interdependency in a multi-layered network, significant efforts
have been made in the research community in the last few
years to achieve this goal \cite{Bul10, Gao11, Sha11, Ros08, Zha05, Par13, Ngu13, Zus11}. Accordingly a number of models have been proposed and
analyzed. Unfortunately, many of the proposed models are
overly simplistic in nature and as such they fail to capture
the complex interdependency that exists between power grid
and communication networks. In a highly cited paper \cite{Bul10},  the
authors assume that every node in one network depends on one
and only one node of the other network. However, in a follow
up paper \cite{Gao11},  the same authors argue that this assumption may
not be valid in the real world and a single node in one network
may depend on more than one node in the other network. A
node in one network may be functional (``alive'') as long as
one supporting node on the other network is functional.  


Although this generalization can account for {\em disjunctive dependency} of a node in the $A$ network (say $a_i$) on more than one node in the $B$ network (say, $b_j$ and $b_k$), implying that $a_i$ may be ``alive'' as long as either $b_i$ or $b_j$ is alive, it cannot account for {\em conjunctive dependency} of the form when {\em both $b_j$ and $b_k$} has to be alive in order for $a_i$ to be alive. In a real network the dependency is likely to be even more complex involving both disjunctive and conjunctive components. For example, $a_i$ may be alive if (i) $b_j$ {\em and} $b_k$ {\em and} $b_l$ are alive, {\em or} (ii) $b_m$ {\em and} $b_n$ are alive, {\em or} (iii) $b_p$ is alive. The graph based interdependency models proposed in the literature \cite{Sha11, Ros08, Zha05, Cas13, Par13, Ngu13} including \cite{Bul10, Gao11} cannot capture such complex interdependency between entities of multilayer networks. In order to capture such complex interdependency, we propose a new model using Boolean logic.
Utilizing this comprehensive model, we provide techniques to identify the $\cal K$ most vulnerable nodes of an interdependent multilayered network system. We show that the this problem can be solved in polynomial time for some special cases, whereas for some others, the problem is NP-complete. We also show that this problem is equivalent to computation of a {\em fixed point} \cite{Fud2010} and we provide a technique utilizing Integer Linear Programming to compute that fixed point. Finally, we evaluate the efficacy of our technique using real data collected from power grid and communication networks that span Maricopa County of Arizona. 

\section{Interdependency Model}

We describe the model for an interdependent network with two layers. However, the concept can easily be generalized to deal with networks with more layers. Suppose that the network entities in layer 1 are referred to as the $A$ type entities, $A = \{a_1, \ldots, a_n\}$ and entities in layer 2 are referred to as the $B$ type entities, $B = \{b_1, \ldots, b_m\}$. If the layer 1 entity $a_i$ is {\em operational} if (i) the layer 2 entities $b_j, b_k, b_l$ are operational, or (ii) $b_m, b_n$ are operational, or (iii) $b_p$ is operational, we express it in terms of {\em live equations} of the form $a_i \leftarrow b_jb_kb_l + b_mb_n + b_p$.  The live equation for a $B$ type entity $b_r$ can be expressed in a similar fashion in terms of $A$ type entities. If $b_r$ is {\em operational} if (i) the layer 1 entities $a_s, a_t, a_u, a_v$ are operational, or (ii) $a_w, a_z$ are operational, we express it in terms of {\em live equations} of the form $b_r \leftarrow a_sa_ta_ua_v + a_wa_z$. It may be noted that the {\em live equations} only provide a {\em necessary condition} for entities such as $a_i$ or $b_r$ to be {\em operational}. In other words, $a_i$ or $b_r$ may {\em fail} independently and may be {\em not operational} even when the conditions given by the corresponding {\em live equations} are {\em satisfied}.
A live equation in general will have the following form: \( x_i \leftarrow \sum_{j = 1}^{T_i}\prod_{k = 1}^{t_j} y_{j, k}\) where $x_i$ and $y_{j, k}$ are elements of the set $A$ ($B$) and $B$ ($A$) respectively, $T_i$ represents the {\em number} of min-terms in the live equation and $t_j$ refers to the {\em size} of the $j$-th min-term (the size of a min-term is equal to the number of $A$ or $B$ elements in that min-term). In the example $a_i \leftarrow b_jb_kb_l + b_mb_n + b_p$, $T_i = 3$, $t_1= 3, t_2 = 2, t_3 = 1$, $x_i = a_i$, $y_{2, 1} = b_m$, $y_{2, 2} = b_p$.

We refer to the live equations of the form $a_i \leftarrow b_jb_kb_l + b_mb_n + b_p$ also as {\em First Order Dependency Relations}, because
these relations express direct dependency of the $A$ type entities
on $B$ type entities and vice-versa. It may be noted however
that as $A$ type entities are dependent on $B$ type entities,
which in turn depends on $A$ type entities, the failure of
some $A$ type entities can trigger the failure of other $A$ type
entities, though {\em indirectly}, through some $B$ type entities. Such
interdependency creates a {\em cascade of failures} in multilayered
networks when only a few entities of either $A$ type or $B$ type
(or a combination) fails. We illustrate this with the help of
an example. The live equations for this example is shown in
table~\ref{eqTbl}.

\begin{table}[H]
\begin{center}
\begin{tabular}{|l||l|}  \hline
{\bf Power Network} & {\bf Communication Network} \\ \hline
$a_1\leftarrow b_1 + b_2$ & $b_1 \leftarrow a_1 + a_2 a_3$ \\ \hline
$a_2 \leftarrow  b_1b_3 + b_2$ & $b_2 \leftarrow a_1 + a_3$ \\ \hline
$a_3 \leftarrow b_1b_2b_3$ & $b_3 \leftarrow a_1a_2$ \\ \hline
$a_4 \leftarrow b_1 + b_2 + b_3$ & $--$ \\ \hline
\end{tabular}
\caption{Live equations for a Multilayer Network}
\protect\label{eqTbl}
\end{center}
\end{table}

\begin{table}[H]
\begin{center}
\begin{tabular}{|c|c|c|c|c|c|c|c|}  \hline
\multicolumn{1}{|c|}{Entities} & \multicolumn{7}{c|}{Time Steps}\\
\cline{2-8} & $t_0$ & $t_1$ & $t_2$ & $t_3$ & $t_4$ & $t_5$ & $t_6$ \\\hline \hline
$a_1$ & $1$ & $1$ & $1$ & $1$ & $1$ & $1$ & $1$ \\ \hline
$a_2$ & $0$ & $0$ & $0$ & $0$ & $1$ & $1$ & $1$ \\ \hline
$a_3$ & $0$ & $0$ & $1$ & $1$ & $1$ & $1$ & $1$ \\ \hline
$a_4$ & $0$ & $0$ & $0$ & $0$ & $1$ & $1$ & $1$ \\ \hline
$b_1$ & $0$ & $0$ & $0$ & $1$ & $1$ & $1$ & $1$ \\ \hline
$b_2$ & $0$ & $0$ & $0$ & $1$ & $1$ & $1$ & $1$ \\ \hline
$b_3$ & $0$ & $1$ & $1$ & $1$ & $1$ & $1$ & $1$ \\ \hline
\end{tabular}
\caption{Time Stepped Cascade Effect for a Multilayer Network}
\protect\label{cascadeTbl}
\end{center}
\end{table}

\vspace{-0.1in}

\begin{figure}[ht]
\begin{center}
 {\includegraphics[width=0.4\textwidth, keepaspectratio]{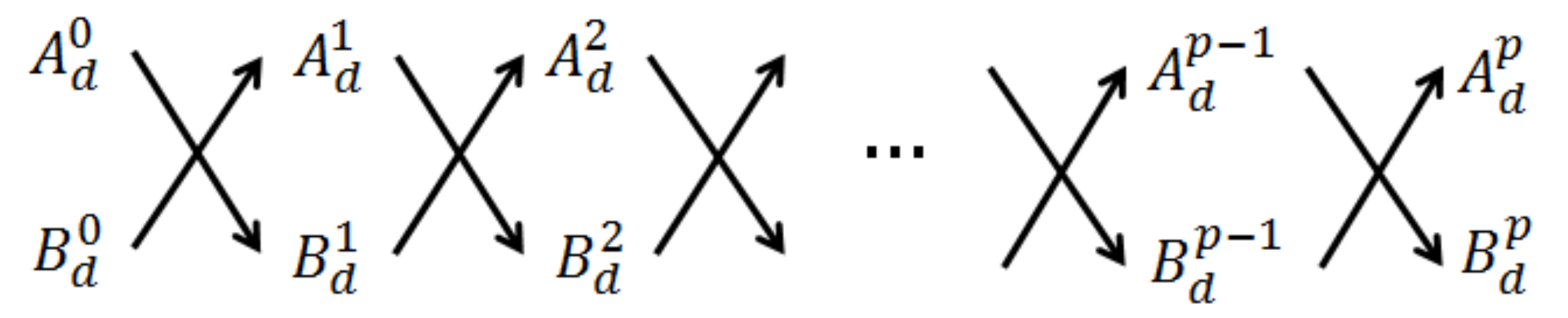}}
  \caption{Cascading failures reach steady state after $p$ time steps}
\label{fig:cascade}
\end{center}
\end{figure}

As shown in table \ref{cascadeTbl}, the failure of only one entity $a_1$ at time step $t_0$ triggered a chain of failures that resulted in the failure of all the entities of the network after by timestep $t_4$. A table entry of 1 indicates that the entity is ``dead''. In this example, the failure of $a_1$ at $t_0$ triggered the failure of $b_3$ at $t_1$, which in turn triggered the failure of $a_3$ at $t_2$. The failure of $b_3$ at $t_1$ was due to the dependency relation $b_3 \leftarrow a_1a_2$ and the failure of $a_3$ at $t_2$ was due to the dependency relation $a_3 \leftarrow b_1b_2b_3$. The cascading failure process initiated by failure (or death) of a subset of $A$ type entities at timestep $t_0$, $A_{d}^0$ and a subset of $B$ type entities $B_{d}^{0}$ till it reaches its final steady state is shown diagrammatically in figure \ref{fig:cascade}. Accordingly, a multilayered network can be viewed as a ``closed loop'' control system. Finding the steady state after an initial failure in this case is equivalent of computing the {\em fixed point} of a function ${\mathcal F}(.)$ such that ${\mathcal F} (A_{d}^p \cup B_{d}^p) = A_{d}^p \cup B_{d}^p$, where $p$ represents the number of steps when the system reaches the steady state.

We define a set of ${\cal K}$ entities in a multi-layered network as most vulnerable, if failure of these ${\cal K}$ entities triggers the failure of the largest number of other entities. The goal of the ${\cal K}$ most vulnerable nodes problem is to identify this set of nodes. This is equivalent to identifying $A_{d}^0 \subseteq A$, $B_{d}^0 \subseteq B$, that {\em maximizes} $|A_{d}^p \cup B_{d}^p|$, subject to the constraint that $|A_{d}^0 \cup B_{d}^0| \le {\cal K}$.

The dependency relations (live equations) can be formed either after careful analysis of the multilayer network along the lines carried out in \cite{Zus11}, or after consultation with the engineers of the local utility and internet service providers.


\section{Computational Complexity and Algorithms}
Based on the number and the size of the min-terms in the dependency relations, we divide them into four different cases as shown in Table \ref{dependencyTbl}. The algorithms for finding the $\cal K$ most vulnerable nodes in the multilayer networks and computation complexity for each of the cases are discussed in the following four subsections.

\begin{table}[H]
\begin{center}
\begin{tabular}{|c|c|c|}  \hline
{\bf Case} & {\bf No. of Min-terms} & {\bf Size of Min-terms}\\ \hline
Case I & $1$ & $1$ \\ \hline
Case II & $1$ & Arbitrary \\ \hline
Case III & Arbitrary & $1$ \\ \hline
Case IV & Arbitrary & Arbitrary \\ \hline
\end{tabular}
\caption{Equation Types for Dependency Relations}
\protect\label{dependencyTbl}
\end{center}
\end{table}

\subsection{Case I: Problem Instance with One Min-term of Size One}

In this case, a live equation in general will have the following form: \( x_i \leftarrow y_{j}\) where $x_i$ and $y_{j}$ are elements of the set $A$ ($B$) and $B$ ($A$) respectively. In the example $a_i \leftarrow b_j$, $x_i = a_i$, $y_{1} = b_j$. It may be noted that a conjunctive implication of the form $a_i \leftarrow b_jb_k$ can also be written as two separate implications $a_i \leftarrow b_j$ and $a_i \leftarrow b_k$. However, such cases are considered in Case II and is excluded from consideration in Case I. The exclusion of such implications implies that the entities that appear on the LHS of an implication in Case I are unique. This property enables us to develop a polynomial time algorithm for the solution of the $\cal K$ most vulnerable node problem for this case. We present the algorithm next.

\vspace{0.05in}
\noindent
{\tt Algorithm 1}

\vspace{0.05in}
\noindent
{\tt Input:} (i) A set $S$ of implications of the form of $y \leftarrow x$, where $x,y \in A \cup B$, (ii) An integer $\cal K$.

\vspace{0.05in}
\noindent
{\tt Output:} A set $V\rq{}$ where $\vert V\rq{} \vert = {\cal K}$ and $V\rq{} \subset A \cup B$ such that failure of entities in $V\rq{}$ at time step $t_0$ results in failure of the largest number of entities in $A \cup B$ when the steady state is reached.

\vspace{0.05in}
\noindent
{\tt Step 1.} We construct a directed graph $G=(V,E)$, where $V= A \cup B$. For each implication $y \leftarrow x$ in $S$, where $x,y \in A \cup B$, we introduce a directed edge $(x,y) \in E$. 

\vspace{0.05in}
\noindent
{\tt Step 2.} For each node $x_i \in V$, we construct a transitive closure set $C_{x_i}$ as follows: If there is a path from $x_i$ to some node $y_i \in V$ in $G$, then we include $y_i$ in $C_{x_i}$. It may be recalled that  $\vert A \vert + \vert B \vert =  n + m$. So, we get $n+m$ transitive closure sets $C_{x_i}, 1 \le i \le (n+m)$. We call each $x_i$ to be the {\em seed entity} for the transitive closure set $C_{x_i}$.

\vspace{0.05in}
\noindent
{\tt Step 3.} We  remove all the transitive closure sets which are proper subsets of some other transitive closure set.

\vspace{0.05in}
\noindent
{\tt Step 4.} Sort the {\em remaining transitive closure sets} $C_{x_i}$, where the rank of the closure sets is determined by the cardinality of the sets. The sets with a larger number of entities are ranked higher than the sets with a fewer number of entities.

\vspace{0.05in}
\noindent
{\tt Step 5.} Construct the set $V\rq{}$ by selecting the seed entities of the top $\cal K$ transitive closure sets.  If the number of {\em remaining transitive closure sets} is less than $\cal K$ (say, ${\cal K}'$), arbitrarily select the remaining entities.

\vspace{0.05in}
\noindent
{\em{Time complexity of Algorithm 1:}}  Step 1 takes $O(n + m + \vert S \vert)$ time. Step 2 can be executed in $O((n+m)^3)$ time. Step 3 takes at most $O((n+m)^2)$ time. Step 4 sorts at most $\vert S \vert$ entries, a standard sorting algorithm takes $O(\vert S \vert~log~\vert S \vert)$ time. Selecting $\cal K$ entities in step 5 takes $O(\cal K)$ time. Since $\vert S \vert \leq n+m$, hence the overall time complexity is $O((n+m)^3)$

\vspace{0.05in}
\noindent
\begin{theorem}{} For each pair of transitive closure sets $C_{x_i}$ and $C_{x_j}$ produced in step 2 of algorithm 1, either $C_{x_i} \cap C_{x_j} = \emptyset$ or $C_{x_i} \cap C_{x_j} = C_{x_i}$ or $C_{x_i} \cap C_{x_j} = C_{x_j}$, where $x_i \neq x_j$.
\end{theorem}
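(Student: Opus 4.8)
The plan is to exploit the special structure of the digraph $G$ built in Step~1 under the Case~I restriction. The crucial observation is that every node of $G$ has \emph{in-degree at most one}: an edge $(x,y)$ is added precisely when $y\leftarrow x$ is an implication in $S$, and since in Case~I each entity appears on the left-hand side of at most one implication, each node $y$ receives at most one incoming edge. Equivalently, any node that has a predecessor has a \emph{unique} predecessor. I would record this as the first step, since the whole argument rests on it, and I would take the convention (implicit in the algorithm) that $x_i\in C_{x_i}$.

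Next I would establish the containment. Suppose $C_{x_i}\cap C_{x_j}\neq\emptyset$ and fix some $z$ in the intersection; the case of empty intersection is exactly the first alternative in the statement. If $z=x_i$ then $x_i\in C_{x_j}$ and we are done by the closing step below, and symmetrically if $z=x_j$; so assume $z\notin\{x_i,x_j\}$. Choose a directed path $P_i\colon x_i=u_0\to u_1\to\cdots\to u_p=z$ and a directed path $P_j\colon x_j=v_0\to v_1\to\cdots\to v_q=z$, which exist by definition of the transitive closure sets. Now trace both paths backward from their common endpoint $z$: since $u_{p-1},u_{p-2},\dots$ and $v_{q-1},v_{q-2},\dots$ are each forced to be the unique predecessor of the preceding node, both reversed sequences are initial segments of one and the same walk, namely the (unique, possibly eventually cyclic) backward walk out of $z$. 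Hence the shorter reversed path is an initial segment of the longer one. Without loss of generality $p\le q$; then $u_0$ occurs among $v_0,\dots,v_q$, i.e.\ $x_i=v_{q-p}$, so $x_i$ lies on $P_j$ and therefore $x_i\in C_{x_j}$.

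Finally I would close the argument by transitivity of reachability: if $x_i\in C_{x_j}$, then every node reachable from $x_i$ is reachable from $x_j$, so $C_{x_i}\subseteq C_{x_j}$ and $C_{x_i}\cap C_{x_j}=C_{x_i}$; the symmetric situation $q\le p$ yields $C_{x_j}\subseteq C_{x_i}$ and $C_{x_i}\cap C_{x_j}=C_{x_j}$. Together with the empty-intersection case this exhausts the three alternatives.

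I expect the only real subtlety — and the step I would be most careful about — to be the backward-tracing argument in the presence of directed cycles or non-simple paths: one must be sure that ``the unique predecessor'' is genuinely well defined at each step and that the two reversed paths really are nested rather than merely sharing the endpoint $z$. The in-degree-at-most-one property is exactly what makes this go through, so I would invoke it explicitly rather than appeal to any acyclicity of $G$, which need not hold. The degenerate cases $z\in\{x_i,x_j\}$ and $p=q$ (the latter forcing $x_i=x_j$, contrary to hypothesis) deserve a one-line check but present no difficulty.
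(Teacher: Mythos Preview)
Your proposal is correct and follows essentially the same approach as the paper: both arguments hinge on the observation that in Case~I every vertex of $G$ has in-degree at most one, and both exploit this by looking at two directed paths that terminate at a common vertex. The paper phrases it as a contradiction (the first vertex shared by the two paths would have two distinct in-neighbours), whereas you argue directly by tracing predecessors backward from the common endpoint to force one seed onto the other's path; your version is the more carefully stated of the two, particularly in its handling of cycles and of the degenerate cases $z\in\{x_i,x_j\}$ and $p=q$.
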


\vspace{0.05in}
\noindent
{  \em{Proof:}} Consider, if possible, that there is a pair of transitive closure sets $C_{x_i}$ and $C_{x_j}$ produced in step 2 of algorithm 1, such that $C_{x_i} \cap C_{x_j} \neq \emptyset$ and $C_{x_i} \cap C_{x_j} \neq C_{x_i}$ and $C_{x_i} \cap C_{x_j} \neq C_{x_j}$. Let $x_k \in C_{x_i} \cap C_{x_j} $. This implies that there is a path from $x_i$ to $x_k$ ($path_1$) as well as there is a path from $x_j$ to $x_k$, ($path_2$). Since, $x_i \not = x_j$ and $C_{x_i} \cap C_{x_j} \not = C_{x_i}$ and $C_{x_i} \cap C_{x_j} = C_{x_j}$, there is some $x_l$ in the $path_1$ such that $x_l$ also belongs to $path_2$. W.l.o.g, let us consider that $x_l$ be the first node in $path_1$ such that $x_l$ also belongs to $path_2$. This implies that $x_l$ has in-degree greater than $1$. This in turn implies that there are two implications in the set of implications $S$ such that $x_l$ appears in the L.H.S of both. This is a contradiction because this violates a characteristic of the implications in Case I. Hence, our initial assumption was wrong and the theorem is proven.
\vspace{0.05in}
\noindent
\begin{theorem}{} Algorithm 1 gives an optimal solution for the problem of selecting $\cal K$ most vulnerable entities in a multi-layer network for case I dependencies.
\end{theorem}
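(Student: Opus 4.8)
The plan is to reformulate the cascade dynamics of Case~I as a pure reachability computation in $G$, and then to reduce the ${\cal K}$ most vulnerable nodes problem to selecting the largest blocks of a partition of $V$.

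First I would prove the reformulation: if $V' \subseteq V$ is the set of entities that fail at time $t_0$, then the set of entities dead once the steady state is reached is exactly $\bigcup_{v \in V'} C_v$, that is, the set of nodes of $G$ reachable from $V'$. Both inclusions follow by induction on the time step, using the conventions that a dead entity stays dead and that no independent failures occur during this computation. The only place Case~I enters is here: since every entity occurs on the left-hand side of at most one implication, every node of $G$ has in-degree at most one, so an entity $w \notin V'$ dies during the cascade if and only if its unique predecessor does, while an entity of in-degree zero never dies from the cascade; hence the dead set grows precisely along directed paths leaving $V'$. Therefore the objective $|A_d^p \cup B_d^p|$ equals $\bigl|\bigcup_{v \in V'} C_v\bigr|$, and the problem becomes: choose $V' \subseteq A\cup B$ with $|V'| \le {\cal K}$ maximizing $\bigl|\bigcup_{v \in V'} C_v\bigr|$.

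Next I would invoke Theorem~1. By Theorem~1 the family $\{C_{x_i}\}$ is laminar, so the closure sets surviving Step~3 --- the maximal ones, say $M_1, \ldots, M_r$, listed without repetition since Steps~4--5 select distinct sets --- are pairwise disjoint; and because $v \in C_v$ for every $v$ and every $C_v$ is contained in some maximal closure set, $M_1 \cup \cdots \cup M_r = V$. Thus the maximal closure sets form a partition of $V$, each block being the closure of (at least) one seed entity. For any feasible $V'$, each $C_v$ lies inside a single block, so $\bigcup_{v \in V'} C_v$ is contained in a union of at most $|V'| \le {\cal K}$ of these disjoint blocks; hence $\bigl|\bigcup_{v \in V'} C_v\bigr|$ is at most the total size of the ${\cal K}$ largest blocks. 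This gives the upper bound. Conversely, Algorithm~1 outputs one seed from each of the ${\cal K}$ largest blocks (or, when $r < {\cal K}$, a seed of every block together with arbitrary extra nodes); these seeds are distinct, as they lie in distinct disjoint blocks, and killing them kills exactly the union of those blocks, whose size equals the upper bound (and when $r \le {\cal K}$ it equals $|V|$). Hence Algorithm~1 attains the optimum.

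I expect the first step --- showing the dead set is $\bigcup_{v \in V'} C_v$ --- to be the main obstacle, as it forces one to pin down the cascade semantics precisely and to run the two-way induction; once that reachability characterization is established, the optimization argument is mostly bookkeeping on top of Theorem~1. A secondary point worth a remark is Step~3 when several seeds share the same maximal closure set (for example seeds on a common directed cycle, which Case~I permits): the optimality argument needs the selected seeds to come from distinct blocks, so such ties must be broken by keeping one seed per block, which is the intended reading of Steps~3--5.
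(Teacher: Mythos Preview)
Your argument is correct and is actually more complete than the paper's. The paper proves Theorem~2 by a short exchange argument: it assumes an optimal solution $V_{OPT}$ differs from the algorithm's output $V'$, picks $x_i \in V_{OPT}\setminus V'$, observes that either $C_{x_i}$ was pruned in Step~3 or $|C_{x_i}|$ is no larger than every $|C_{x_j}|$ with $x_j\in V'$, and concludes that swapping $x_i$ for any $x_j\in V'$ cannot increase the number of failures. In contrast, you proceed directly: you first identify the steady-state dead set with the reachability set $\bigcup_{v\in V'}C_v$ (which the paper tacitly assumes), then use Theorem~1 to show the maximal closure sets partition $V$, derive the upper bound as the sum of the ${\cal K}$ largest blocks, and show the algorithm attains it. Your route makes explicit the two ingredients the paper leaves implicit---the cascade-equals-reachability lemma and the partition structure---and your side remark about ties (several seeds sharing the same maximal closure set, as happens on a directed cycle) flags a genuine edge case that the paper's Step~3, which removes only \emph{proper} subsets, does not literally handle; the paper's exchange argument and your block-selection argument both need one seed per block, so your reading of Steps~3--5 is the right one.
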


\vspace{0.05in}
\noindent
{\em{Proof:}} Consider that the set $V\rq{}$ returned by the algorithm is not optimal and the optimal solution is $V_{OPT}$. Let us consider there is a entity $x_i \in A \cup B$ such that $x_i \in V_{OPT} \setminus V\rq{}$. Evidently, (i) $C_{x_i}$ was either deleted in step 3 or (ii) $\vert C_{x_i} \vert $ is less than the cardinalities of all the transitive closure sets with seed entities $x_j \in V\rq{}$, because our algorithm did not select $x_i$. Hence, in both cases, replacing any entity $x_j \in V\rq{}$ by $x_i$ reduces the total number of entities killed. Thus, the number of dead entities by the failure of entities in $ V_{OPT}$ is lesser than that caused by the failure of the entities in $V\rq{}$, contradicting the optimality of $V_{OPT}$. Hence, the algorithm does in fact return the optimal solution.

\subsection{Case II: Problem Instance with One Min-term of Arbitrary Size }
In this case, a live equation in general will have the following form: \( x_i \leftarrow \prod_{k = 1}^{q} y_{j}\) where $x_i$ and $y_{j}$ are elements of the set $A$ ($B$) and $B$ ($A$) respectively, $q$ represents the {\em size} of min-term. In the example $a_i \leftarrow b_jb_kb_l$, $q = 3$, $x_i = a_i$, $y_{1} = b_j$, $y_{2} = b_k$, $y_{3} = b_k$.

\vspace{0.05in}
\noindent
\subsubsection{Computational Complexity}
We show that computation of $\cal K$ most vulnerable nodes ($\cal K$-MVN) in a multilayer network is NP-complete in Case II. We formally state the problem next.

\vspace{0.05in}
\noindent
{\tt Instance}: Given a set of dependency relations between $A$ and $B$ type entities in the form of live equations \( x_i \leftarrow \prod_{k = 1}^{q} y_{j}\), integers $\cal K$ and $\cal L$.\\
{\tt Question}: Is there a subset of $A$ and $B$ type entities of size at most $\cal K$ whose ``death'' (failure) at time $t_0$, triggers a cascade of failures resulting in failures of at least $\cal L$ entities, when the steady state is reached?
\vspace{0.05in}
\noindent
\begin{theorem}
The $\cal K$-MVN problem is NP-complete.
\end{theorem}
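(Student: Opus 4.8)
The plan is to establish membership in NP and then NP-hardness by a polynomial reduction from \textsc{Set Cover}. Membership is immediate: given a candidate initial-failure set $D_0 \subseteq A \cup B$ with $|D_0| \le \mathcal{K}$, the steady-state dead set $D_\infty$ is produced by the monotone process that repeatedly marks an entity $x_i$ dead whenever some entity of its min-term (a single min-term, in Case~II) is already dead; this halts after at most $|A|+|B|$ rounds, each round scanning all equations once, so $D_\infty$ --- and hence the test $|D_\infty|\ge\mathcal{L}$ --- is computable in polynomial time.

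For NP-hardness I would reduce from \textsc{Set Cover}: given a universe $\mathcal{U}=\{e_1,\dots,e_r\}$, subsets $\mathcal{S}_1,\dots,\mathcal{S}_p\subseteq\mathcal{U}$ (w.l.o.g.\ every $e_t$ lies in some $\mathcal{S}_i$), and an integer $J$, decide whether $J$ of the $\mathcal{S}_i$ cover $\mathcal{U}$. The key feature of Case~II is that each entity carries one conjunctive min-term, so it fails exactly when at least one entity of that min-term fails; hence the cascade is precisely directed reachability in the digraph with an arc $y\to x$ whenever $y$ appears in the min-term of $x$. I would therefore build a two-layer instance whose reachability digraph is the set--element incidence graph: take $A=\{a_1,\dots,a_p\}$, one entity per subset, and $B=\{b_1,\dots,b_r\}$, one entity per element; give each $b_t$ the single live equation $b_t \leftarrow \prod_{i\,:\,e_t\in\mathcal{S}_i} a_i$ and give the $a_i$ no dependency relation (an entity that depends on nothing is allowed by the model); finally set $\mathcal{K}=J$ and $\mathcal{L}=J+r$. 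This instance is legitimate for Case~II and constructible in polynomial time.

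It remains to verify correctness. If $\{\mathcal{S}_i : i\in I\}$ with $|I|=J$ covers $\mathcal{U}$, then failing $\{a_i : i\in I\}$ at $t_0$ kills every $b_t$, because the min-term of $b_t$ contains a dead $a_i$; this gives $J+r=\mathcal{L}$ dead entities. Conversely, suppose some $W$ with $|W|\le\mathcal{K}=J$ forces at least $\mathcal{L}=J+r$ deaths. Since every arc runs from $A$ into $B$, no $a_i$ outside $W$ ever dies, so $D_\infty\cap A=W\cap A$ and therefore
\[
|D_\infty| \;=\; |W\cap A| + |D_\infty\cap B| \;\le\; |W\cap A| + r \;\le\; |W| + r \;\le\; J + r .
\]
Thus equality holds throughout: $|W\cap A|=J$, $W\cap B=\emptyset$, and all $r$ entities of $B$ are dead. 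A dead $b_t$ (which, since $W\cap B=\emptyset$, must have died from the cascade) witnesses an $a_i\in W\cap A$ with $e_t\in\mathcal{S}_i$, so $\{\mathcal{S}_i : a_i\in W\cap A\}$ is a size-$J$ cover of $\mathcal{U}$. Together with NP membership this proves the theorem. The one step that truly requires care --- and the heart of the soundness argument --- is the displayed inequality: it shows that spending part of the budget on $B$-entities, or on fewer than $J$ of the $a_i$, can never meet the threshold, so the $\mathcal{K}$-MVN instance is a yes-instance precisely when a $J$-element set cover exists.
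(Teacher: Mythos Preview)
Your argument is correct and takes a genuinely different route from the paper. The paper reduces from \textsc{Vertex Cover}: it introduces one entity per vertex of $G$, gives each $v_i$ the single conjunctive live equation $v_i \leftarrow \prod_{v_j\in N(v_i)} v_j$, and sets $\mathcal{K}=R$, $\mathcal{L}=|V|$, so that killing a vertex cover kills every remaining vertex in one round. Your reduction from \textsc{Set Cover} instead keeps the two layers $A$ and $B$ explicit and makes the dependency digraph acyclic (arcs only from $A$ into $B$). This buys you two things: the constructed instance visibly conforms to the bipartite $A$/$B$ format of the model, and the soundness direction admits the clean counting argument you give via the displayed inequality. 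You also supply the NP-membership argument, which the paper omits. The paper's construction is more compact, but it blurs the $A$/$B$ distinction and leaves the converse direction (why a size-$R$ seed whose cascade reaches all of $V$ must in fact be a vertex cover, rather than merely a set whose failure percolates through the graph) unargued; your proof handles the analogous step explicitly and carefully.
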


\vspace{0.05in}
\noindent
{\em{Proof:}} We prove that the $\cal K$-MVN problem is NP-complete by giving a transformation for the vertex cover (VC) problem.  An instance of the vertex cover problem is specified by an undirected graph $G = (V, E)$ and an integer $R$. We want to know if there is a subset of nodes $S \subseteq V$ of size at most $R$, so that every edge has at least one end point  in $S$. From an instance of the VC problem, we create an instance of the $\cal K$-MVN problem in the following way. First, from the graph $G = (V, E)$, we create a directed graph $G' = (V, E')$ by replacing each edge $e \in E$ by two oppositely directed edges $e_1$ and $e_2$ in $E'$ (the end vertices of $e_1$ and $e_2$ are same as the end vertices of $e$). Corresponding to a node $v_i$ in $G'$ that has incoming edges from other nodes (say) $v_j$, $v_k$ and $v_l$, we create a dependency relation (live equation) $v_i \leftarrow v_j v_k v_l$. We set $\cal K$ = $R$ and $\cal L$ = $|V|$. The corresponding death equation is of the form $\bar {v_i} \leftarrow \bar {v_j} + \bar{ v_k} + \bar {v_l}$ (obtained by taking {\em negation} of the live equation). We set $\cal K$ = $R$ and $\cal L$ = $|V|$. It can now easily be verified that if the graph $G = (V, E)$ has a vertex cover of size $R$ iff in the created instance of $\cal K$-MVN problem  death (failure) of at most $\cal K$ entities at time $t_0$, will trigger a cascade of failures resulting in failures of at least $\cal L$ entities, when the steady state is reached.

\subsubsection{Optimal Solution with Integer Linear Programming}

In this case, we can find and optimal solution to the $\cal K$-MVN problem using Integer Linear Programming (ILP). We associate binary indicator variables $x_i$ ($y_i$) to capture the state of the entities $a_i$ ($b_i$). $x_i$ ($y_i$) is 1 when $a_i$ ($b_i$) is dead and 0 otherwise. Since we want find the set of $\cal K$ entities whose failure at time step $t_0$
triggers cascading failure resulting in the failure of the largest number of entities, the objective of the ILP can be written as follows
\({\tt maximize}~~~\sum_{i = 1}^n x_i + \sum_{i = 1}^m y_i\)
It may be noted that the variables in the objective function do not have any notion of {\em time}. However, cascading failure takes place in time steps, $a_i$ triggers failure of $b_j$ at time step $t_1$, which in turn triggers failure of $a_k$ in time step $t_2$ and so on. Accordingly, in order to capture the cascading failure process, we need  to introduce the notion of time into the variables of the ILP. If the numbers of $A$ and $B$ type entities are $n$ and $m$ respectively, the steady state must be reached by time step $n + m -1$ (cascading process starts at time step 0, $t_0$). Accordingly, we introduce $n + m$ versions of the variables $x_i$ and $y_i$, i.e., $x_i[0], \ldots, x_i[n + m -1]$ and $y_i[0], \ldots, y_i[n + m -1]$.
To indicate the state of entities $a_i$ and $b_i$ at times $t_0, \ldots, t_{n + m - 1}$. The objective of the ILP is now changed to
\[{\tt maximize} \sum_{i = 1}^n x_i[n + m -1] + \sum_{i = 1}^m y_i[n + m -1]\]

Subject to the constraint that no more than $\cal K$ entities can fail at time $t_0$.\\
\noindent
{\tt Constraint~1:} \(\sum_{i = 1}^n x_i[0] + \sum_{i = 1}^m y_i[0] \leq {\cal K}\)
\noindent
In order to ensure that the cascading failure process conforms to the dependency relations between type $A$ and $B$ entities, additional constraints must be imposed.\\
\noindent
{\tt Constraint 2}: If an entity fails at time fails at time step $p$, (i.e., $t_p$) it should continue to be in the failed state at all time steps $t >  p$. That is $x_i (t) \geq x_i (t-1), \forall t, 1 \leq t \leq n + m - 1$. Same constraint applies to $y_i (t)$.\\
\noindent
{\tt Constraint 3}: The dependency relation (death equation) $\bar {a_i} \leftarrow \bar {b_j} + \bar {b_k} + \bar {b_l}$ can be translated into a linear constraint in the following way \(x_i (t) \leq y_j (t -1) + y_k (t -1) + y_l(t -1), \forall t, 1 \leq t \leq n + m - 1\). 

\vspace{0.05in}
\noindent
The optimal solution to $\cal K$-MVN problem for Case II can be found by solving the above ILP.  

\subsection{Case III: Problem Instance with an Arbitrary Number of Min-terms of Size One}
%

A live equation in this special case will have the following form: \( x_i \leftarrow \sum_{j = 1}^{q} y_{j}\) where $x_i$ and $y_{j}$ are elements of the set $A$ ($B$) and $B$ ($A$) respectively, $q$ represents the {\em number} of min-terms in the live equation. In the example $a_i \leftarrow b_j + b_k + b_l$, $q = 3$, $x_i = a_i$, $y_{1} = b_j$, $y_{2} = b_k$, $y_{3} = b_l$.
\subsubsection{Computational Complexity}
We show that a special case of the problem instances with an arbitrary number of min-terms of size one is same as the {\em Subset Cover} problem (defined below), which is proven to be NP-complete. We define 
$Implication\_Set (A)$ to be the set of all implications of the form $a_i \leftarrow \sum_{j = 1}^{T_i} b_{j}$ and $Implication\_Set (B)$ to be the set of all implications of the form $b_i \leftarrow \sum_{j = 1}^{T_i} a_{j}$.
Now consider a subset of the set of problem instances with an arbitrary number of min-terms of size one where either $Implication\_Set (A) = \emptyset$ or $Implication\_Set (B) = \emptyset$.
Let \(A' = \{a_i | a_i ~\tt{is~the~element~on~the~LHS~of~an~implication}\}\)
\(\tt{in~the~Implication\_Set (A)}\). The set $B'$ is defined accordingly. If $Implication\_Set (B) = \emptyset$ then $B' = \emptyset$. In this case, failure of any $a_i, 1 \leq i \leq n$ type entities will not cause failure of any $b_j, 1 \leq j \leq m$ type entities. Since an adversary can cause failure of only $\cal K$ entities, the adversary would like to choose only those $\cal K$ entities that will cause failure of the largest number of entities. In this scenario, there is no reason for the adversary to attack any $a_i, 1 \leq i \leq n$ type entities as they will not cause failure of any $b_j, 1 \leq j \leq m$ type entities. On the other hand, if the adversary attacks  $\cal K$ $b_j$ type entities, not only those $\cal K$ $b_j$ type entities will be destroyed, some  $a_i$ type entities will also be destroyed due to the implications in the $Implication\_Set (A)$. As such the goal of the adversary will be to carefully choose $\cal K$ $b_j, 1 \leq j \leq m$ type entities that will destroy the largest number of $a_i$ type entities.  In its abstract form, the problem can be viewed as the Subset Cover problem. 

\noindent
{\tt Subset Cover Problem}\\
\noindent
{\em Instance}: A set $S = \{s_1, \ldots, s_m\}$, a set $\cal S$ of $m$ subsets of $S$, i.e., ${\cal S}$ = $\{S_1, \ldots, S_r\}$, where
$S_i \subseteq S, \forall i, 1 \leq i \leq r$, integers $p$ and $q$. \\
\noindent
{\em Question}: Is there a $p$ element subset $S'$ of $S$ ($p < n$) that completely covers at least $q$ elements of the set $\cal S$? (A set $S'$ is said to be {\em completely covering} an element $S_i, \forall i, 1 \leq i \leq m$ of the set $\cal S$, if $S' \cap S_i = S_i, \forall i, 1 \leq i \leq m$.)

The set $S$ in the subset cover problem corresponds to the set $B = \{b_1, \ldots, b_m\}$, and each set $S_i, 1 \leq i \leq r$ corresponds to an implication in the $Implication_Set(A)$ and comprises of
the $b_j$'s that appear on the RHS of the implication. The goal of the problem is to select a subset $B''$ of $B$ that maximizes the number of $S_i$'s completely covered by $B''$.
\newpage
\noindent
\begin{theorem}
The Subset Cover problem is NP-complete.
\end{theorem}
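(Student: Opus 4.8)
The plan is to prove both directions of NP-completeness: membership in NP, which is routine, and NP-hardness via a polynomial-time reduction from the \textsc{Clique} problem (deciding whether an undirected graph $G=(V,E)$ contains a clique on $k$ vertices), one of Karp's classical NP-complete problems.

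For membership in NP, the certificate is simply the candidate subset $S' \subseteq S$. A verifier checks in polynomial time that $|S'| = p$ and then, for each $S_i \in {\cal S}$, tests whether $S_i \subseteq S'$; counting the sets $S_i$ that pass this test and comparing the count with $q$ decides the instance. All of this is polynomial in $m$ and $r$.

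For NP-hardness, I would reduce from \textsc{Clique}. Given $G=(V,E)$ and an integer $k$ (we may assume $k < |V|$, since otherwise the instance is trivial), construct a Subset Cover instance as follows: take the ground set $S = V$; for each edge $\{u,v\} \in E$ put the two-element set $\{u,v\}$ into the collection ${\cal S}$, so ${\cal S}$ has exactly $r = |E|$ members, each of size two; and set $p = k$ and $q = \binom{k}{2}$. This is clearly computable in polynomial time, and $p < |V| = |S|$, as required. The correctness hinges on a single observation: a subset $S' \subseteq V$ completely covers a two-element set $\{u,v\}$ (i.e.\ $S' \cap \{u,v\} = \{u,v\}$) if and only if both $u$ and $v$ belong to $S'$, that is, if and only if $\{u,v\}$ is an edge of $G$ induced by $S'$. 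Hence, for $|S'| = k$, the number of members of ${\cal S}$ completely covered by $S'$ equals the number of edges of $G$ with both endpoints in $S'$, which is at most $\binom{|S'|}{2} = \binom{k}{2} = q$, with equality precisely when $S'$ induces a clique. Therefore $G$ has a $k$-clique if and only if the constructed instance admits a $k$-element subset of $S$ completely covering at least $q$ members of ${\cal S}$.

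The only delicate point in the argument is the tightness of the bound $\binom{k}{2}$: because a $k$-element set can induce at most $\binom{k}{2}$ of the two-element sets, the threshold ``at least $q$'' with $q = \binom{k}{2}$ coincides exactly with the clique condition, so no separate case analysis or gadgetry is needed. Combining this reduction with the NP membership above establishes that Subset Cover is NP-complete; one may further remark that the construction uses only sets of size two, so the hardness already holds for that restricted family, which is exactly the form relevant to the Case III dependency relations discussed above.
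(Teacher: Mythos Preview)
Your proposal is correct and follows essentially the same approach as the paper: both reduce from \textsc{Clique} by taking $S=V$, letting ${\cal S}$ consist of the two-element endpoint sets of edges, and setting $p=k$, $q=\binom{k}{2}$. Your write-up is in fact tighter than the paper's---you include the NP-membership argument the paper omits, and your converse direction makes explicit the counting bound (a $k$-set can completely cover at most $\binom{k}{2}$ two-element sets, with equality forcing a clique) that the paper only gestures at.
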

\begin{proof}
We prove that the Subset Cover problem is NP-complete by giving a transformation from the well known
Clique problem. It may be recalled that an instance of the
Clique problem is specified by a graph $G = (V, E)$ and an integer $K$. The decision question is whether or not a clique
of size at least $K$ exists in the graph $G = (V, E)$. We show
that a clique of size $K$ exists in graph $G = (V, E)$ iff the
Subset Cover problem instance has a $p$ element subset $S'$ of $S$ that completely covers at least $q$ elements of the set $\cal S$.

From an instance of the Clique problem, we create an instance of the Subset Cover problem in the following way. Corresponding to every vertex $v_i, 1 \leq i \leq n$ of the graph $G = (V, E)$ ($V = \{v_1, \ldots, v_n\}$), we create an element in the set $S = \{s_1, \ldots, s_n\}$. Corresponding to every edge $e_i, 1 \leq i \leq m$, we create $m$ subsets of $S$, i.e., ${\cal S}$ = $\{S_1, \ldots, S_m\}$, where $S_i$ corresponds to a two element subset of nodes, corresponding to the end vertices of the edge $e_i$. We set the parameters $p = K$ and $q = K(K - 1)/2$. Next we show that in the instance of the subset cover problem created by the above construction process, a $p$ element subset $S'$ of $S$ exists that completely covers at least $q$ elements of the set $\cal S$, iff the graph $G = (V, E)$ has a clique of size at least $K$.

Suppose that the graph $G = (V, E)$ has a clique of size $K$. It is clear that in the created instance of the subset cover problem, we will have $K(K - 1)/2$ elements in the set $\cal S$, that will be completely covered by a $K$ element subset of the set $S$. The $K$ element subset of $S$ corresponds to the set of $K$ nodes that make up the clique in $G = (V, E)$ and the $K(K - 1)/2$ elements in the set $\cal S$ corresponds to the edges of the graph $G = (V, E)$ that corresponds to the edges of the clique. Conversely, suppose that the instance of the Subset Cover problem has $K$ element subset of $S$ that completely covers $K(K - 1)/2$ elements of the set $\cal S$. Since the elements of $\cal S$ corresponds  to the edges in $G$, in order to completely cover $K(K-1)/2$ edges, at least $K$ nodes (elements of the set $S$) will be necessary. As such, this set of $K$ nodes will constitute a clique in the graph $G = (V, E)$.
\end{proof}
\subsubsection{Optimal Solution with Integer Linear Programming}

If the live equation is in the form \( x_i \leftarrow \sum_{k = 1}^{q} y_{j}\) then the ``death equation'' (obtained by taking {\em negation} of the live equation) will be in the {\em product} form \(\bar {x_i} \leftarrow \prod_{j = 1}^{q} {\bar y_{j}}\). If the live equation is given as 
${a_i} \leftarrow {b_j} + {b_k} $, then the death equation will be given as $\bar {a_i} \leftarrow \bar {b_j} \bar {b_k}$.

By associating binary indicator variables $x_i$ and $y_i$ to capture the state of the entities $a_i$ and $b_i$, we can follow almost identical procedure as in Case II, with only one exception. It may be recalled that in Case II, the death equations such as $\bar {a_i} \leftarrow \bar {b_j} + \bar {b_k}$ was translated into a linear constraint  \(x_i (t) \leq y_j (t -1) + y_k (t -1), \forall t, 1 \leq t \leq n + m - 1\). However a similar translation in Case III, with death equations such as $\bar {a_i} \leftarrow \bar {b_j} \bar {b_k}$, will result in a {\em non-linear} constraint of the form  \(x_i (t) \leq y_j (t -1) y_k (t -1), \forall t, 1 \leq t \leq n + m - 1\). Fortunately, a non-linear constraint of this form can be replaced a linear constraint such as \(2x_i (t) \leq y_j (t -1) + y_k (t -1), \forall t, 1 \leq t \leq n + m - 1\). After this transformation, we can compute the optimal solution using integer linear programming.

\subsection{Case IV: Problem Instance with an Arbitrary Number of Min-terms of Arbitrary Size}
\subsubsection{Computational Complexity}
Since both Case II and Case III are special cases of Case IV, the computational complexity of finding the $\cal K$ most vulnerable nodes in the multilayer network in NP-complete in Case IV also.
\subsubsection{Optimal Solution with Integer Linear Programming}
The optimal solution to this version of the problem can be computed by combining the techniques developed for the solution of the versions of the problems considered in Cases II and III.

%

\begin{figure*}[ht]
\begin{center}
  \subfigure[Snapshot of Power Network in Maricopa County]{\includegraphics[width=0.45\textwidth, keepaspectratio]{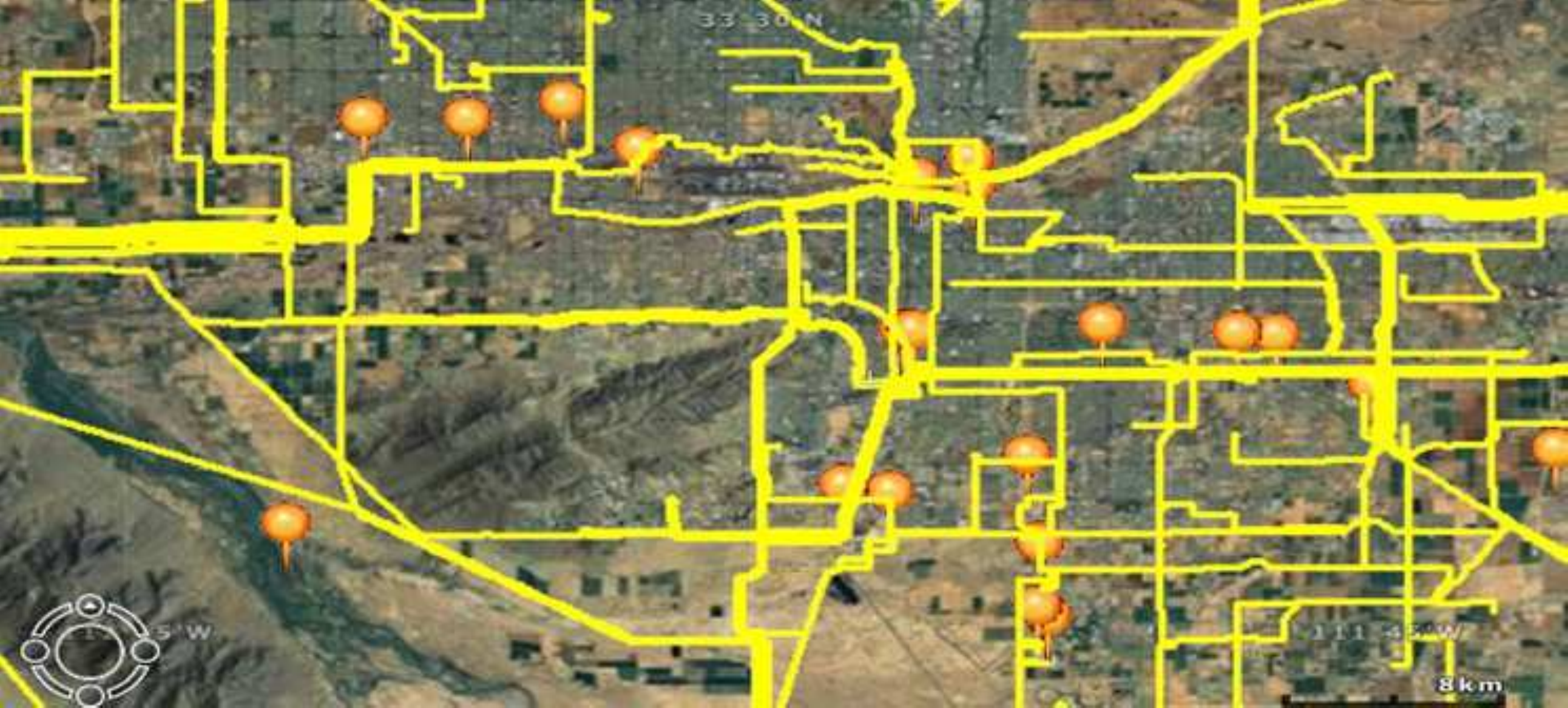}}
  \subfigure[Snapshot of Communication Network in Maricopa County]{\includegraphics[width=0.45\textwidth, keepaspectratio]{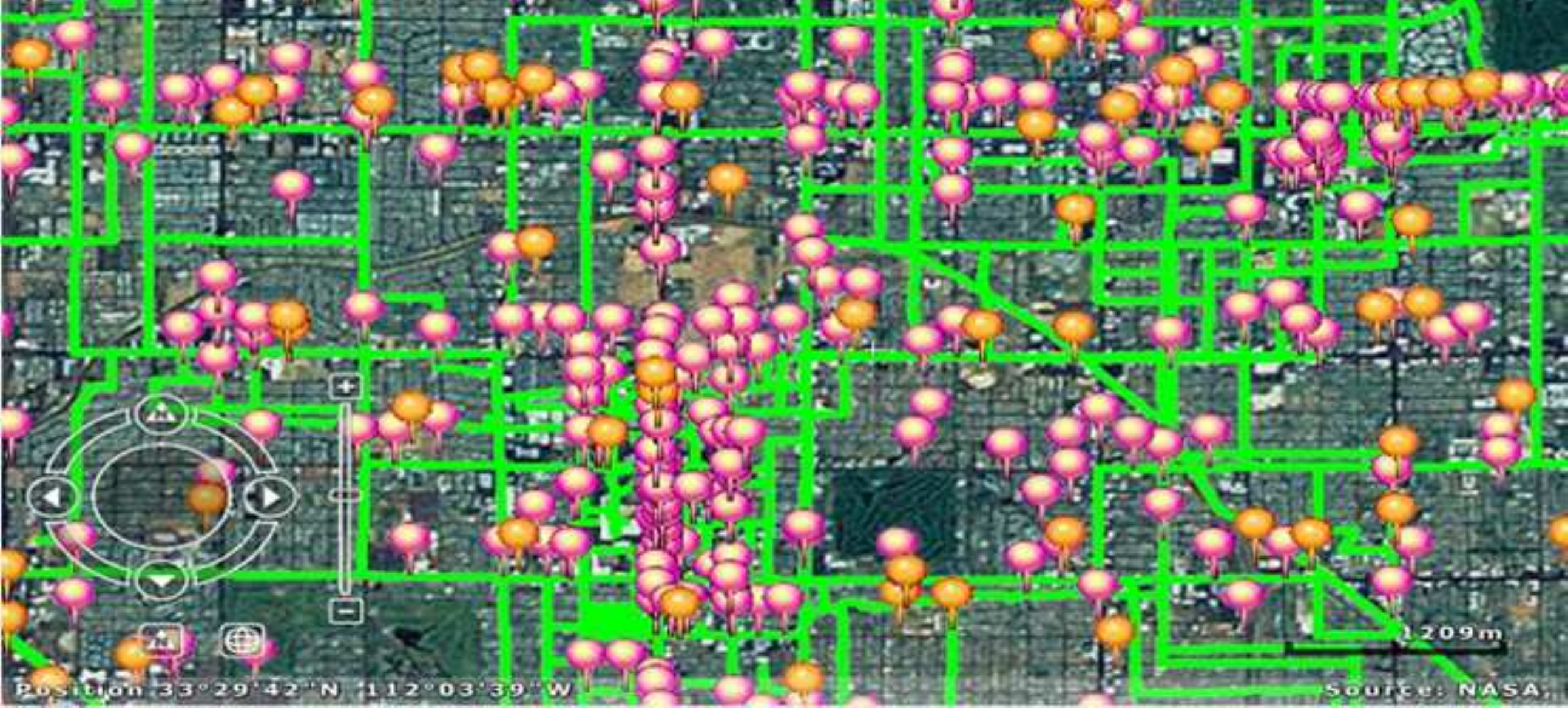}}
  \caption{Snapshots of power network and communication network in Maricopa County)}
\label{fig:powerCommNet}
\end{center}
\end{figure*}
\section{Experimental results}
We applied our model to study multilayer vulnerability
issues in Maricopa County, the most densely populated county
of Arizona with approximately 60\% of Arizona’s population
residing in it. Specifically, we wanted to find out if some
regions of Maricopa County were more vulnerable to failure
than some other regions. The data for our multi-layered
network were obtained from different sources. We obtained
the data for the power network (network A) from Platts (http://www.platts.com/). Our power network dataset consists of $70$ power plants and $470$ transmission lines. Our communication network (network B) data were obtained from GeoTel (http://www.geo-tel.com/). Our communication network data consists of $2,690$ cell towers and $7,100$ fiber-lit buildings as well as $42,723$ fiber links. Snapshots of our power network data and communication network data are shown in figure \ref{fig:powerCommNet}. In the power network snapshot of sub-figure(a), the orange markers show locations of powerplants while the yellow continuous lines represent the transmission lines. In the communication network snapshot of sub-figure (b) the pink markers show the location of fiber-lit buildings, the orange markers show the location of cell towers and the green continuous lines represent the fiber links. In our dataset, \lq{}load\rq{} in the Power Network is divided into Cell towers and Fiber-lit buildings. Although there exists various other physical entities which also draw electric power and hence can be viewed as load to the power network, as they are not relevant to our study on interdependency between power and communication networks, we ignore such entities. Thus in network A, we have the three types of Power Network Entities (PNE\rq{}s)  - Generators, Load (consisting of Cell towers and Fiber-lit buildings) and Transmission lines (denoted by $a_1,a_2,a_3$ respectively). For the Communication Network, we have the following Communication Network Entities (CNE\rq{}s) - Cell Towers, Fiber-lit buildings and Fiber links (denoted by $b_1,b_2,b_3$ respectively). We consider the Fiber-lit buildings as a communication network entities as they house routers which definitely are communication network entities. From the raw data we construct {\em Implication\_Set(A)} and {\em Implication\_Set(B)}, by following the rules stated below:

\vspace{0.05in}
\noindent
{\tt Rules:} We consider that a PNE is dependent on a set of CNEs for being in the active state (\lq{}alive\rq{}) or being in the inactive state (\lq{}dead\rq{}). Similarly, a CNE is dependent on a set of PNEs for being active or inactive state. For simplicity we consider the live equations with at most two minterms. For the same reason we consider the size of each minterm is at most two. 


\vspace{0.05in}
\noindent
{\em Generators ($a_{1, i}, 1 \leq i \leq p$, where $p$ is the total number of generators):}  We consider that each generator ($a_{1.i}$) is dependent on the nearest Cell Tower ($b_{1,j}$) or the nearest Fiber-lit building  ($b_{2,k}$) and the corresponding Fiber link ($b_{3,l}$) connecting $b_{2,k}$ and $a_{1,i}$. Thus, we have \\
$a_{1,i} \leftarrow b_{1,j}+b_{2,k} \times b_{3,l}$

\vspace{0.05in}
\noindent
{\em Load ($a_{2,i}, 1 \leq i \leq q$, where $q$ is the total number of loads):} We consider that the loads in the power network do not depend on any CNE.

\vspace{0.05in}
\noindent
{\em Transmission Lines ($a_{3,i}, 1 \leq i \leq r$, where $r$ is the total number of transmission lines):} We consider that the transmission lines do not depend on any CNE.

\vspace{0.05in}
\noindent
{\em Cell Towers ($b_{1,i}, 1 \leq i \leq s$, where $s$ is the total number of cell towers):} We consider the cell towers depend on the nearest pair of generators and the corresponding transmission line connecting the generator to the cell tower. Thus, we have $b_{1,i} \leftarrow a_{1,j} \times a_{3,k}+a_{1,j\rq{}} \times a_{3,k\rq{}}$

\vspace{0.05in}
\noindent
{\em Fiber-lit Buildings ($b_{2,i}, 1 \leq i \leq t$, where $t$ is the total number of fiber-lit buildings): } We consider that the fiber-lit buildings depend on the nearest pair of generators and the corresponding transmission lines connecting the generators to the cell tower. Thus, we have $b_{2,i} \leftarrow a_{1,j} \times a_{3,k}+a_{1,j\rq{}} \times a_{3,k\rq{}}$

\vspace{0.05in}
\noindent
{\em Fiber Links ($b_{3,i}, 1 \leq i \leq u$, where $u$ is the total number of fiber links)):} We consider that the fiber links do not depend on any PNE.

\vspace{0.05in}
\noindent
Because of experimental resource limitation, we have considered $5$ regions of Maricopa County for our experiments.  
We used IBM CPLEX Optimizer 12.5 to run the formulated ILP's on the experimental dataset. We show our results in the figure \ref{fig:plot}. We observe that in each of the regions there is a specific
budget threshold beyond which each additional increment in
budget results in the death of only one entity. The reason
for this behavior is our assumption that entities such as the
transmission lines and the fiberlinks are not dependent on
any other entities. We notice that all the entities of the two
networks can be destroyed with a budget of about $60\%$ of the number of entities of the two networks $A$ and $B$. Most importantly, we find that the degree of vulnerability of all the five regions considered in our study are close and no one region stands out as being extremely vulnerable.  

\begin{figure}[ht]
\begin{center}
 {\includegraphics[width=0.4\textwidth, keepaspectratio]{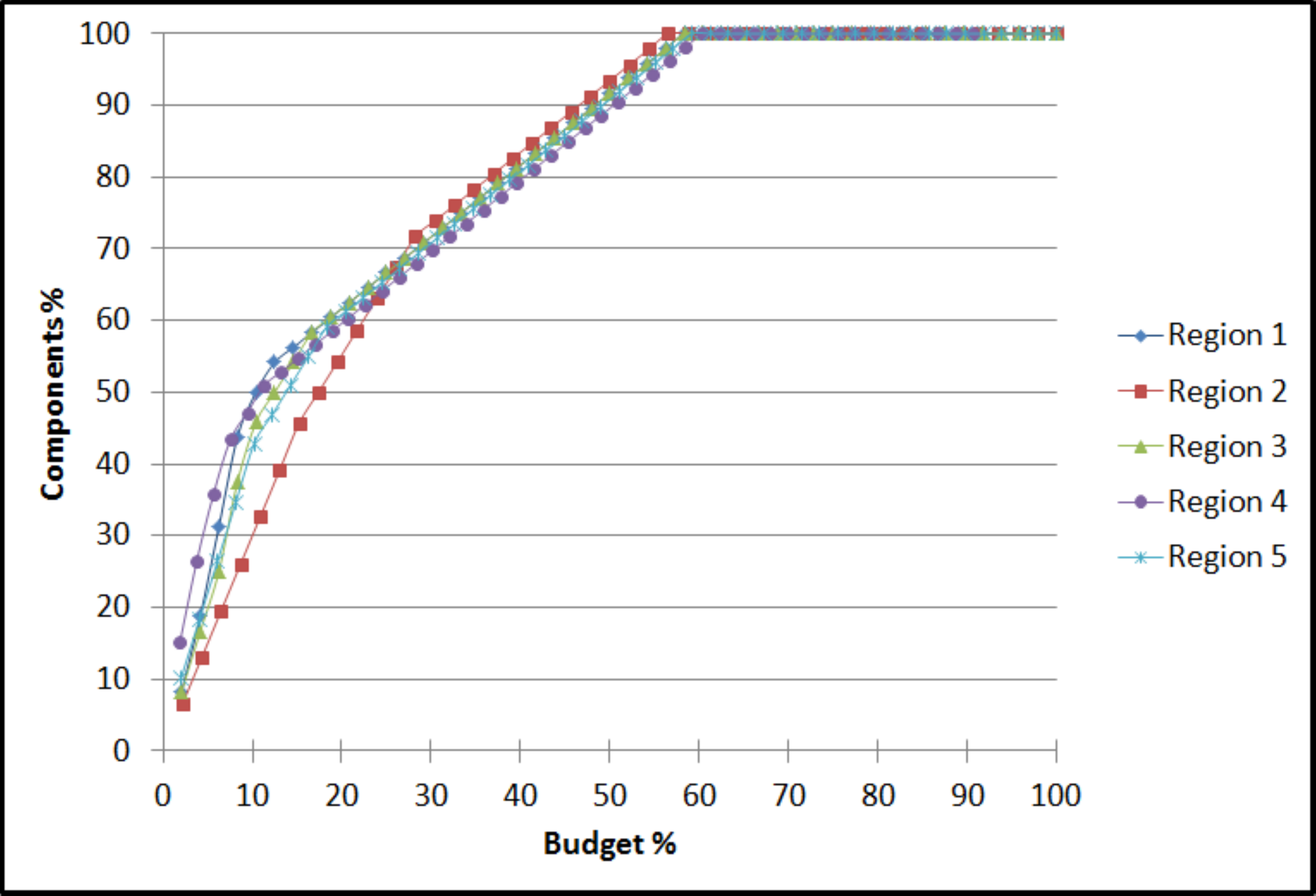}}
  \caption{Experimental results of failure vulnerability across five regions of Maricopa county}
\label{fig:plot}
\end{center}
\end{figure}
\begin{footnotesize}
\bibliographystyle{IEEEtran}
\bibliography{IEEEabrv,references,referencesBibToAdd}
\end{footnotesize}
\end{document}